\newcommand{\IR}{\ensuremath{\mathbb{R}}}
\newcommand{\IN}{\ensuremath{\mathbb{N}}}
\newcommand{\set}[1]{\left\{#1\right\}}
\newcommand{\abs}[1]{\left|#1\right|}
\newcommand{\brackets}[1]{\left(#1\right)}
\newtheorem*{thm}{Theorem}
\theoremstyle{plain}
\newtheorem{lemma}{Lemma}
\theoremstyle{definition}
\title{On the Dispersion of Sparse Grids} 
\author{
David Krieg\\ 
}
\date{}
\begin{document}

\maketitle

\begin{abstract}
\noindent
 For any natural number $d$ and positive number $\varepsilon$,
 we present a point set in the $d$-dimensional unit cube $[0,1]^d$
 that intersects every axis-aligned box of volume greater than $\varepsilon$.
 These point sets are very easy to handle and
 in a vast range for $\varepsilon$ and $d$,
 we do not know any smaller set with this property.
\end{abstract}

\section{The Result}

The dispersion of a point set $P$ in $[0,1]^d$
is the volume of the largest axis-aligned box in $[0,1]^d$
which does not intersect~$P$. 
Point sets with small dispersion already proved to be useful
for the uniform recovery of rank one tensors~\cite{nr}
and for the discretization of the uniform norm of
trigonometric polynomials~\cite{t}.
Recently, great progress has been made in the question
for the minimal size for which there \emph{exists} a point set
whose dispersion is at most $\varepsilon>0$, see
Dumitrescu and Jiang~\cite{dj}, Aistleitner, Hinrichs and Rudolf~\cite{ahr},
Rudolf~\cite{r} and Sosnovec~\cite{s}.
In this note, we want to \emph{provide} such point sets.
They should be both small and easy to handle.
To that end, we define the point sets
\begin{equation*}
 P(k,d)=\bigcup_{\abs{\boldsymbol j}=k}
 M_{j_1} \times\dots\times M_{j_d}
\end{equation*}
of order $k\in\IN_0$ and dimension $d\in\IN$, generated by the one-dimensional sets
\begin{equation*}
 M_j=\set{\frac{1}{2^{j+1}},\frac{3}{2^{j+1}},\hdots,\frac{2^{j+1}-1}{2^{j+1}}}
  \quad
 \text{for}
 \quad
 j\in\IN_0.
\end{equation*}
You can find a picture of the set of order $3$ in dimension $2$ in Figure~\ref{Point Set}.
These point sets are particular instances of a sparse grid
as widely used for high-dimensional numerical integration
and approximation. We refer to
Novak and Woźniakowski~\cite{track2} and the references therein.
Here, we will prove the following result.

\begin{thm}
 Let $k(\varepsilon)=\left\lceil \log_2\brackets{\varepsilon^{-1}} \right\rceil -1$
 for any $\varepsilon\in (0,1)$ and let $d\geq 2$.
 Then the dispersion of $P(k(\varepsilon),d)$ is at most $\varepsilon$
 and
 \begin{equation*}
  \abs{P(k(\varepsilon),d)} =
  2^{k(\varepsilon)}\, \binom{d+k(\varepsilon)-1}{d-1}.
 \end{equation*}
\end{thm}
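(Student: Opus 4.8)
The plan is to treat the two assertions separately; the cardinality is quick and the dispersion bound is the substance. For the cardinality I would first note that the products $M_{j_1}\times\dots\times M_{j_d}$ are pairwise disjoint: each coordinate of a point in such a product is an odd multiple of $2^{-(j_i+1)}$, i.e. a dyadic rational whose reduced denominator is exactly $2^{j_i+1}$, so the point already determines $\boldsymbol j$. Since $\abs{M_j}=2^{j}$, the disjoint union gives $\abs{P(k,d)}=\sum_{\abs{\boldsymbol j}=k}2^{j_1+\dots+j_d}=2^{k}\cdot\#\{\boldsymbol j\in\IN_0^d:\abs{\boldsymbol j}=k\}=2^{k}\binom{d+k-1}{d-1}$ by stars and bars. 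For the dispersion I would reduce everything to the statement that, for $d\ge 2$, every open axis-aligned box of volume $>2^{-(k+1)}$ meets $P(k,d)$: since $k=k(\varepsilon)$ satisfies $2^{k+1}\ge\varepsilon^{-1}$, hence $2^{-(k+1)}\le\varepsilon$, this statement yields $\disp\le 2^{-(k+1)}\le\varepsilon$.

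The engine consists of two one-dimensional facts about an interval $(a,b)\subseteq[0,1]$ of length $L$, with $x:=\log_2(1/L)$ and $n:=\lfloor x\rfloor$. First, $M_j$ is equally spaced with gaps at most $2^{-j}$, so $(a,b)\cap M_j\ne\emptyset$ whenever $2^{-j}<L$, i.e. for every $j\ge n+1$. Second, and this is the observation that saves the constant, the union $M_0\cup\dots\cup M_n$ is exactly $\{m/2^{n+1}:1\le m\le 2^{n+1}-1\}$, an equally spaced grid of spacing $2^{-(n+1)}<L$; since an interval longer than the spacing must meet such a grid, the minimal level $\nu:=\min\{j:(a,b)\cap M_j\ne\emptyset\}$ satisfies $\nu\le n$, and in particular $2^{-\nu}\ge 2^{-n}\ge L$. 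Note this only asserts that some low level works; the set of admissible levels generally has gaps, which is why the estimate must be done carefully.

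I would then prove the boxed claim by induction on $d$, doing the genuinely two-dimensional base case $d=2$ by hand and the passage from $d\ge 3$ to $d-1$ by peeling off one coordinate. For the step, write $B=B'\times(a_d,b_d)$ and set $\ell:=\nu_d\le n_d$; then $2^{-\ell}\ge L_d$, so $\mathrm{vol}(B')=\mathrm{vol}(B)/L_d\ge 2^{\ell}\,\mathrm{vol}(B)>2^{-((k-\ell)+1)}$, and the induction hypothesis in dimension $d-1\ge 2$ and order $k-\ell\ge 0$ produces $\boldsymbol j'$ with $\abs{\boldsymbol j'}=k-\ell$ hitting $B'$; appending $\ell$ hits $B$. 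For the base case, from $\mathrm{vol}>2^{-(k+1)}$ I get $n_1+n_2\le x_1+x_2<k+1$, hence $n_1+n_2\le k$ and, using $\nu_i\le n_i$, also $\nu_1+\nu_2\le k$. If $\nu_1+\nu_2=k$ the index pair $(\nu_1,\nu_2)$ already hits $B$; if $\nu_1+\nu_2<k$, then were both of the choices $(\nu_1,k-\nu_1)$ and $(k-\nu_2,\nu_2)$ to fail the admissibility test $k-\nu_i\ge n_{\,\cdot}+1$, I would obtain $\nu_1\ge k-n_2$ and $\nu_2\ge k-n_1$, so $\nu_1+\nu_2\ge 2k-(n_1+n_2)\ge k$, a contradiction; hence one of these pairs works.

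The main obstacle is obtaining the exact constant $2^{-(k+1)}$. The crude sufficient condition $2^{-j_i}<L_i$ forces each $j_i\ge\lfloor\log_2(1/L_i)\rfloor+1$, so the naive choice overshoots the budget $\log_2(1/\mathrm{vol})<k+1$ by an additive term of order $d$, which would only give a bound of order $2^{-(k-d)}$. What rescues the sharp constant is the second 1D fact $\nu_i\le\lfloor\log_2(1/L_i)\rfloor$: it lets one coordinate be resolved cheaply at a low level while the remaining volume is charged to the other coordinates, and the delicate point is the bookkeeping showing that the chosen levels can be made to sum to exactly $k$ rather than $k+1$. I expect the remaining care to be purely routine: handling strict versus non-strict inequalities at the extremal empty box of volume exactly $2^{-(k+1)}$ (for instance $(0,1/2)^2$ when $k=1$), and the edge cases where some $\log_2(1/L_i)$ is an integer.
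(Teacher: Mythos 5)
Your proof is correct, but it takes a genuinely different route from the paper's. The paper works directly with an empty box $B$: it uses the density of $\bigcup_m P(m,d)$ to pick the \emph{minimal global level} $m$ with $B\cap P(m,d)\neq\emptyset$, squeezes each $I_\ell$ into an interval of length $2^{-j_\ell}$ around the single witness point $\boldsymbol x$, and then splits into the cases $m>k$ (giving $\abs{B}\le 2^{-m}\le 2^{-(k+1)}$) and $m<k$ (where the emptiness of $B$ with respect to $P(k,d)$ tightens each factor to $2^{m-k-j_\ell}$, and $d\ge 2$ delivers the constant). You instead prove the contrapositive --- every box of volume greater than $2^{-(k+1)}$ meets $P(k,d)$ --- by induction on $d$, using \emph{per-coordinate} minimal levels $\nu_i$ together with the key observation that $M_0\cup\dots\cup M_n$ is the full dyadic grid of spacing $2^{-(n+1)}$, which yields $\nu_i\le\lfloor\log_2(1/L_i)\rfloor$; your two-candidate-pair argument in the base case $d=2$ and the budget check $k-\ell\ge 0$ in the inductive step are both sound, and the admissibility test is correctly used only as a sufficient condition. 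What the paper's argument buys is uniformity in $d$ (no induction, no separate base case) plus the exact value of the dispersion, since it also exhibits an empty box of volume $2^{-(k+1)}$; what yours buys is an explicitly constructive location of a point of $P(k,d)$ inside any sufficiently large box, at the cost of the more delicate bookkeeping you acknowledge. The cardinality computation is the same as the paper's, with your explicit disjointness justification (reduced dyadic denominators determine $\boldsymbol j$) filling in a step the paper leaves implicit.
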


The formula for the size of $P(k(\varepsilon),d)$ may be simplified.
On the one hand, we have
\begin{equation*}
 \abs{P(k(\varepsilon),d)}
 \leq \varepsilon^{-1}\left\lceil \log_2\brackets{\varepsilon^{-1}} \right\rceil^{d-1},
\end{equation*}
which shows that the size grows almost linearly in $1/\varepsilon$ for a fixed dimension $d$.
On the other hand,
\begin{equation*}
 \abs{P(k(\varepsilon),d)}
 \leq (2d)^{k(\varepsilon)},
\end{equation*}
which shows that the size only grows polynomially in $d$ for a fixed error tolerance~$\varepsilon$.
Although very simple, $P(k(\varepsilon),d)$ is the smallest explicitly known
point set in $[0,1]^d$ with dispersion at most $\varepsilon$
for many instances of $\varepsilon$ and $d$, see Section~\ref{Comparison Section}.


\section{The Proof}

Let us introduce the notation used in this section.
We write $[d]=\set{1,\hdots,d}$ for each $d\in\IN$
and $\abs{\boldsymbol j}=j_1+\hdots+j_d$ for $\boldsymbol j\in\IN_0^d$.
The vector $\boldsymbol e_\ell\in\IR^d$ has the entry 1 in the $\ell$th coordinate
and 0 in all other coordinates.
A set $B\subset \IR^d$ is called a box,
if it is the Cartesian product of $d$ open intervals.
Its volume $\abs{B}$ is the product of their lengths.
If $M$ is a finite set, $\abs{M}$ denotes the number of its elements.
We start with computing the number of elements in $P(k,d)$ for $k\in\IN_0$
and $d\in\IN$.

\begin{lemma}
 \begin{equation*}
  \abs{P(k,d)}=2^k\, \binom{d+k-1}{d-1}.
 \end{equation*}
\end{lemma}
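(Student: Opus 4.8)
The plan is to show that the union defining $P(k,d)$ is in fact a \emph{disjoint} union, after which the count reduces to a sum of products of cardinalities over multi-indices. First I would record the size of each one-dimensional building block: the set $M_j$ consists of the odd multiples of $2^{-(j+1)}$ lying in $(0,1)$, i.e.\ the numerators $1,3,\dots,2^{j+1}-1$, so that $\abs{M_j}=2^j$.

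The key observation --- and essentially the only step that requires any thought --- is that the sets $M_j$ are pairwise disjoint for distinct $j$. Indeed, every element of $M_j$ is a dyadic rational whose reduced denominator is exactly $2^{j+1}$, since its numerator is odd; hence a single number cannot belong to both $M_j$ and $M_{j'}$ unless $j=j'$. From this I would deduce that the grids $M_{j_1}\times\dots\times M_{j_d}$ attached to different multi-indices $\boldsymbol j$ are themselves disjoint: if a point $\boldsymbol x$ lay in two of them, then for each coordinate $\ell$ we would have $x_\ell\in M_{j_\ell}\cap M_{j'_\ell}$, forcing $j_\ell=j'_\ell$ and therefore $\boldsymbol j=\boldsymbol j'$.

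Granted the disjointness, I would finish with the computation
\begin{equation*}
 \abs{P(k,d)}
 =\sum_{\abs{\boldsymbol j}=k}\prod_{\ell=1}^{d}\abs{M_{j_\ell}}
 =\sum_{\abs{\boldsymbol j}=k}2^{j_1+\dots+j_d}
 =\sum_{\abs{\boldsymbol j}=k}2^{k}
 =2^{k}\cdot\abs{\set{\boldsymbol j\in\IN_0^d : \abs{\boldsymbol j}=k}}.
\end{equation*}
The remaining factor counts the ways of writing $k$ as an ordered sum of $d$ nonnegative integers, which by the standard stars-and-bars argument equals $\binom{d+k-1}{d-1}$, yielding the claimed formula.

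The main (and essentially only) obstacle is the disjointness of the union; every other step is a routine cardinality count. It is worth flagging that this disjointness is precisely the feature distinguishing the sparse grid $P(k,d)$ from a full tensor grid, so it is natural that it carries the combinatorial content of the statement.
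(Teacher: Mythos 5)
Your proof is correct and follows essentially the same route as the paper: compute $\abs{M_j}=2^j$, sum over the multi-indices with $\abs{\boldsymbol j}=k$, and count the compositions by stars and bars. The only difference is that you explicitly justify the disjointness of the union (via the reduced dyadic denominators), a point the paper's proof takes for granted; this is a welcome extra detail but not a different argument.
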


\begin{proof}
 Note that $\abs{M_j}=2^j$ for all $j\in\IN$ and also for $j=0$.
 The identity
 \begin{equation*}
  \abs{P(k,d)}=
  \sum_{\abs{\boldsymbol j}=k}
  \abs{M_{j_1}\times\hdots\times M_{j_d}} 
  = \sum_{\abs{\boldsymbol j}=k}
  2^{j_1+\hdots+j_d}
  = 2^k\, \abs{\set{\boldsymbol j\in\IN_0^d: \abs{\boldsymbol j}=k}}
 \end{equation*}
 yields the statement of the lemma.
\end{proof}

It follows from \cite{t} that the dispersion of $P(k,d)$
decays like $2^{-k}$, if $d$ is fixed and $k$ tends to infinity.
In fact, it can be computed precisely.
In dimension $d=1$, it is easy to see that the dispersion of $P(k,d)$
equals $2^{-k}$ for $k\geq 1$ and $1/2$ for $k=0$.
In dimension $d\geq 2$, we obtain the following.

\begin{lemma}
For any $k\in\IN_0$ and $d\geq 2$, the dispersion of $P(k,d)$ is $2^{-(k+1)}$.
\end{lemma}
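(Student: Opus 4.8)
The plan is to prove matching lower and upper bounds on the dispersion. For the lower bound I will exhibit a single empty box of volume $2^{-(k+1)}$. The key observation is that $\bigcup_{i=0}^{k}M_i$ is exactly the equispaced grid $\set{1/2^{k+1},2/2^{k+1},\dots,(2^{k+1}-1)/2^{k+1}}$, so the open interval $I=\brackets{2^{-(k+1)},2\cdot 2^{-(k+1)}}$ of length $2^{-(k+1)}$ misses every $M_i$ with $i\le k$. Taking $B=I\times(0,1)^{d-1}$, any point of a grid $M_{j_1}\times\dots\times M_{j_d}$ with $\abs{\boldsymbol j}=k$ has first index $j_1\le k$, so its first coordinate avoids $I$; hence $B$ is empty and $\disp(P(k,d))\ge\abs{B}=2^{-(k+1)}$.

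For the upper bound I will show that every empty box $B=I_1\times\dots\times I_d$ has volume at most $2^{-(k+1)}$. Writing $K_\ell=\set{j\in\IN_0:I_\ell\cap M_j\ne\emptyset}$ for the levels hit in coordinate $\ell$ and $J_\ell=\IN_0\setminus K_\ell$ for the levels avoided, emptiness of $B$ is equivalent to $k\notin K_1+\dots+K_d$, i.e.\ to the covering condition that for every $\boldsymbol j$ with $\abs{\boldsymbol j}=k$ some coordinate has $j_\ell\in J_\ell$. The first ingredient is a one-dimensional length estimate: since $M_{\max J_\ell}$ already occupies every odd multiple of $2^{-(\max J_\ell+1)}$, no gap of $\bigcup_{j\in J_\ell}M_j$ contains two consecutive such multiples, giving $\abs{I_\ell}\le 2^{-\max J_\ell}$; and if $J_\ell$ contains the whole block $\set{0,\dots,\max J_\ell}$ the union is the full equispaced grid and the bound improves to $\abs{I_\ell}\le 2^{-(\max J_\ell+1)}$. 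In short $\abs{I_\ell}\le 2^{-c(J_\ell)}$, where $c(\emptyset)=0$, $c(J)=\max J+1$ if $J=\set{0,\dots,\max J}$, and $c(J)=\max J$ otherwise.

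It then remains to prove the purely combinatorial statement that the covering condition forces $\sum_\ell c(J_\ell)\ge k+1$; this is the heart of the argument and the step I expect to be hardest. I will argue the contrapositive: assuming $\sum_\ell c(J_\ell)\le k$, I construct a composition $\boldsymbol j$ of $k$ with $j_\ell\in K_\ell$ for all $\ell$, contradicting emptiness. The useful structure is that $K_\ell$ contains every integer $\ge\max J_\ell+1$ together with $\min K_\ell\le c(J_\ell)$; when $J_\ell$ is empty or a full block one even has $K_\ell=[c(J_\ell),\infty)\cap\IN_0$, so any surplus can be absorbed there. The only delicate case is when every $J_\ell$ misses an element below its maximum: then each $K_\ell$ has a hole $h_\ell\le c(J_\ell)-1$ but need not contain $c(J_\ell)$ itself, so I place each coordinate at its hole and push the remaining mass onto the first coordinate, which then receives at least $c(J_1)+d-1\ge c(J_1)+1$ and thus lands safely in the tail of $K_1$ exactly because $d\ge 2$. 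This is precisely where the hypothesis enters: for $d=1$ the covering condition only yields $c(J_1)\ge k$, matching the larger one-dimensional dispersion $2^{-k}$. Combining the length estimate with this bound gives $\abs{B}\le 2^{-\sum_\ell c(J_\ell)}\le 2^{-(k+1)}$, and together with the lower bound the dispersion equals $2^{-(k+1)}$.
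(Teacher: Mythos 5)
Your proof is correct, and it takes a genuinely different route from the paper's. The paper's upper bound picks the \emph{minimal} level $m$ at which $B$ meets the full hierarchy $\bigcup_{m\in\IN}P(m,d)$, takes a point $\boldsymbol x\in P(m,d)\cap B$ and shows that the neighbours $x_\ell\pm 2^{-(j_\ell+1)}$ (if $m>k$), respectively $x_\ell\pm 2^{-(k-m+j_\ell+1)}$ (if $m<k$), cannot lie in $I_\ell$; this traps each $I_\ell$ in a dyadic gap and yields $\abs{B}\le 2^{-m}$ or $\abs{B}\le 2^{dm-dk-m}\le 2^{-(k+1)}$, the hypothesis $d\ge 2$ entering only in the second case. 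You instead record per coordinate the set $K_\ell$ of levels met, restate emptiness as $k\notin K_1+\dots+K_d$, bound $\abs{I_\ell}\le 2^{-c(J_\ell)}$ by a gap count, and reduce the lemma to the combinatorial claim $\sum_{\ell}c(J_\ell)\ge k+1$, with $d\ge 2$ used in the case where every $J_\ell$ has a hole below its maximum --- which corresponds exactly to the paper's case $m<k$. Your version makes the combinatorics explicit and self-contained (and explains cleanly why $d=1$ only achieves $2^{-k}$), at the price of a case analysis on the shape of each $J_\ell$; the paper's minimal-$m$ device is shorter because it never needs to classify these sets. When writing it up, dispose first of boxes with empty interior (otherwise $\max J_\ell$ is undefined), and spell out the mixed case where some coordinates are ``nice'' (empty or initial-block $J_\ell$) and others are not: your remark that any surplus can be absorbed at a nice coordinate covers it, but it deserves an explicit sentence.
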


\begin{SCfigure}
  \caption{The Point Set $P(3,2)$.
  This picture shows the set $P(k,d)$ of order $3$
  in dimension $2$.
  The largest empty box has the volume $1/16$,
  the size of $16$ of the little squares.
  If any of the $32$ points is removed,
  an empty box of volume $1/8$ emerges.}
  \includegraphics[width=0.65\textwidth]{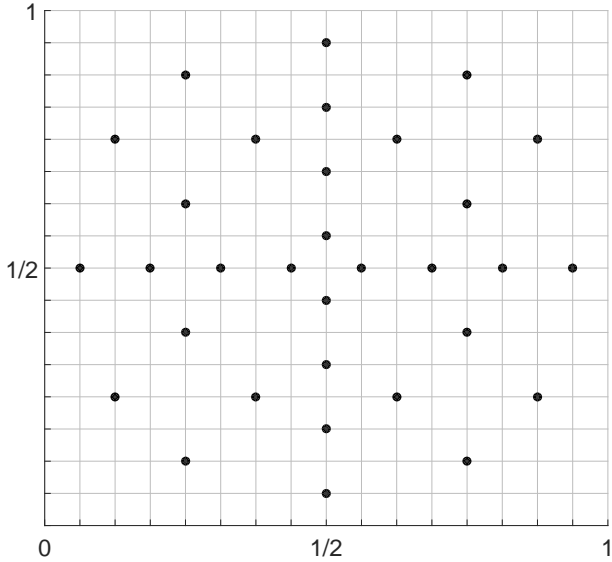}
  \label{Point Set}
\end{SCfigure}

\begin{proof}
We first observe that there are many boxes of volume $2^{-(k+1)}$
which do not intersect with $P(k,d)$. For instance, the box
\begin{equation*}
 (0,2^{-(k+1)})\times(0,1)\times\dots\times(0,1)
\end{equation*}
has these properties.
Now, let $B=I_1\times\dots\times I_d$ be any box in $[0,1]^d$
that does not intersect with $P(k,d)$. The set
\begin{equation*}
 P= \bigcup_{m\in\IN} P(m,d)
 = \set{\frac{\alpha}{2^\beta} \mid 
 \beta \in\IN \text{ and } \alpha\in \left[2^\beta-1\right]}^d
\end{equation*}
is dense in $[0,1]^d$.
Without loss of generality, we assume that the interior of $B$ is nonempty.
Therefore, $B$ intersects with $P$ and hence with $P(m,d)$ for some~$m\in\IN$.
Let $m$ be minimal with this property.
Since $B$ does not intersect with $P(k,d)$, we either have $m>k$ or $m<k$.
Let $\boldsymbol x\in P(m,d)\cap B$.
This means that there is some $\boldsymbol j\in\IN_0^d$ with
$\abs{\boldsymbol j}=m$ and
\begin{equation*}
 x_\ell \in M_{j_\ell} \cap I_\ell
\end{equation*}
for all $\ell\in[d]$.
We observe that the numbers
$x_\ell \pm \frac{1}{2^{j_\ell+1}}$ are either
contained in $\set{0,1}$ or in $M_j$ for some $j< j_\ell$.
Hence, they are not contained in $I_\ell$,
because $I_\ell$ is a subset of $(0,1)$ and $m$ is minimal.
We obtain that 
\begin{equation*}
 I_\ell \subseteq
 \brackets{x_\ell - \frac{1}{2^{j_\ell+1}},x_\ell + \frac{1}{2^{j_\ell+1}}},
\end{equation*}
and hence
\begin{equation*}
 \abs{B} \leq \prod_{\ell\in[d]} 2^{-j_\ell} = 2^{-m}.
\end{equation*}
In the case $m>k$, this yields the statement.
In the case $m<k$, we observe that the
numbers
$x_\ell \pm \frac{1}{2^{k-m+j_\ell+1}}$
cannot be contained in $I_\ell$ for any $\ell\in[d]$,
since otherwise the points
\begin{equation*}
\boldsymbol x \pm \frac{\boldsymbol e_\ell}{2^{k-m+j_\ell+1}}
\end{equation*}
would be both in $B$ and in $P(k,d)$.
This means that
\begin{equation*}
 I_\ell \subseteq
 \brackets{x_\ell - \frac{1}{2^{k-m+j_\ell+1}},x_\ell + \frac{1}{2^{k-m+j_\ell+1}}}.
\end{equation*}
We obtain
\begin{equation*}
 \abs{B} \leq 
 \prod_{\ell\in[d]} 2^{m-k-j_\ell} 
 = 2^{dm-dk-m}
 \leq 2^{-(k+1)},
\end{equation*}
where we used that $d\geq 2$, and the statement is proven.
\end{proof}

For any $\varepsilon\in (0,1)$, the smallest number $k\in\IN_0$
that satisfies $2^{-(k+1)}\leq\varepsilon$ is obviously given by
\begin{equation*}
 k(\varepsilon)=\left\lceil \log_2\brackets{\varepsilon^{-1}} \right\rceil -1.
\end{equation*}
This yields the statement of our theorem.

\section{A Comparison with Known Results}
\label{Comparison Section}

Let $d\geq 2$ be an integer and $\varepsilon\leq 1/4$ be a positive number.
Let us call a point set in $[0,1]^d$ admissible,
if its dispersion is at most $\varepsilon$.
In 2015, Aistleitner, Hinrichs and Rudolf~\cite{ahr} proved
that any admissible point set contains at least
\begin{equation}
\label{lower bound}
 (4\varepsilon)^{-1} (1-4\varepsilon)\log_2 d
\end{equation} 
points. At that time, the smallest known admissible point set
was a Halton-Hammersley set of size
\begin{equation}
\label{hammersley}
 \left\lceil 2^{d-1} \pi_d\, \varepsilon^{-1}\right\rceil,
\end{equation}
where $\pi_d$ is the product of the first $(d-1)$ primes.
This was proven by Rote and Tichy~\cite{rt}, see also 
Dumitrescu an Jiang~\cite{dj} for more details.
The size of this set grows as slowly as possible as $\varepsilon$
tends to zero, if $d$ is fixed.
However, it grows super-exponentially with $d$.
Gerhard Larcher realized that also certain $(t,m,s)$-nets
that contain only
\begin{equation}
\label{nets}
 \left\lceil 2^{7d+1} \varepsilon^{-1}\right\rceil,
\end{equation}
points are admissible. The proof is included in~\cite{ahr}.
This number is smaller than (\ref{hammersley}) for $d\geq 54$.
However, its exponential growth with respect to $d$ for fixed $\varepsilon$
is still far away from the logarithmic growth of the lower bound (\ref{lower bound}).
In the beginning of 2017, Rudolf~\cite{r} significantly narrowed
this gap. In line with 
results of Blumer, Ehrenfeucht, Haussler and Warmuth~\cite{behw},
he obtained the existence of an admissible point set of size
\begin{equation}
\label{Daniel}
 \left\lfloor 8d\, \varepsilon^{-1} \log\brackets{33\varepsilon^{-1}} \right\rfloor.
\end{equation}
Very recently, the remaining gap was completely closed by Sosnovec~\cite{s},
who proved the existence of an admissible point set of size
\begin{equation}
 \label{Jacub}
 \left\lfloor q^{q^2+2}(1+4\log q)\cdot \log d \right\rfloor,
\end{equation}
where $q=\lceil 1/\varepsilon\rceil$.
Just like the lower bound (\ref{lower bound}), this number
only grows logarithmically with $d$.
On the other hand, it now depends super-exponentially on $1/\varepsilon$.
This can most likely be improved,
but up to now, (\ref{Daniel}) is still the smallest known upper bound
for the minimal size
$N_*(\varepsilon,d)$ 
of admissible point sets for a vast majority 
of parameters $\varepsilon~\geq~10^{-12}$ and $d~\leq~10^{12}$.
However, the upper bounds (\ref{Daniel}) and (\ref{Jacub}) are both
nonconstructive in the sense
that no admissible point set was provided.
Here, we proved the existence of an admissible point set of size
\begin{equation}
 \label{new}
 2^{k(\varepsilon)}\, \binom{d+k(\varepsilon)-1}{d-1}
\end{equation}
where $k(\varepsilon)=\lceil \log_2\brackets{\varepsilon^{-1}}\rceil -1$,
which does not grow exponentially with $d$ either,
but we also provided such a set, namely $P(k(\varepsilon),d)$.
Already for modest dimensions, this number is much smaller than
the cardinalities (\ref{hammersley}) and (\ref{nets}).
As an example, we consider the parameters $d\in\set{2,\hdots,100}$ and
$\varepsilon\in\set{1/4,1/5,\dots,1/100}$ in Figure~\ref{comparison figure}.
The dark gray area represents the parameters for which the author
does not know any smaller admissible set than $P(k(\varepsilon),d)$,
although Rudolf~\cite{r} proved their existence.
There are also some parameters, where $P(k(\varepsilon),d)$ improves
on all the mentioned upper bounds for $N_*(\varepsilon,d)$,
constructive or nonconstructive. 
\mbox{These parameters are indicated} by the black area.
In the light gray area, the Halton-Hammersley point set is smaller.

\begin{figure}
 \caption{Minimality Properties of $P(k(\varepsilon),d)$.}
 \includegraphics[width=\textwidth]{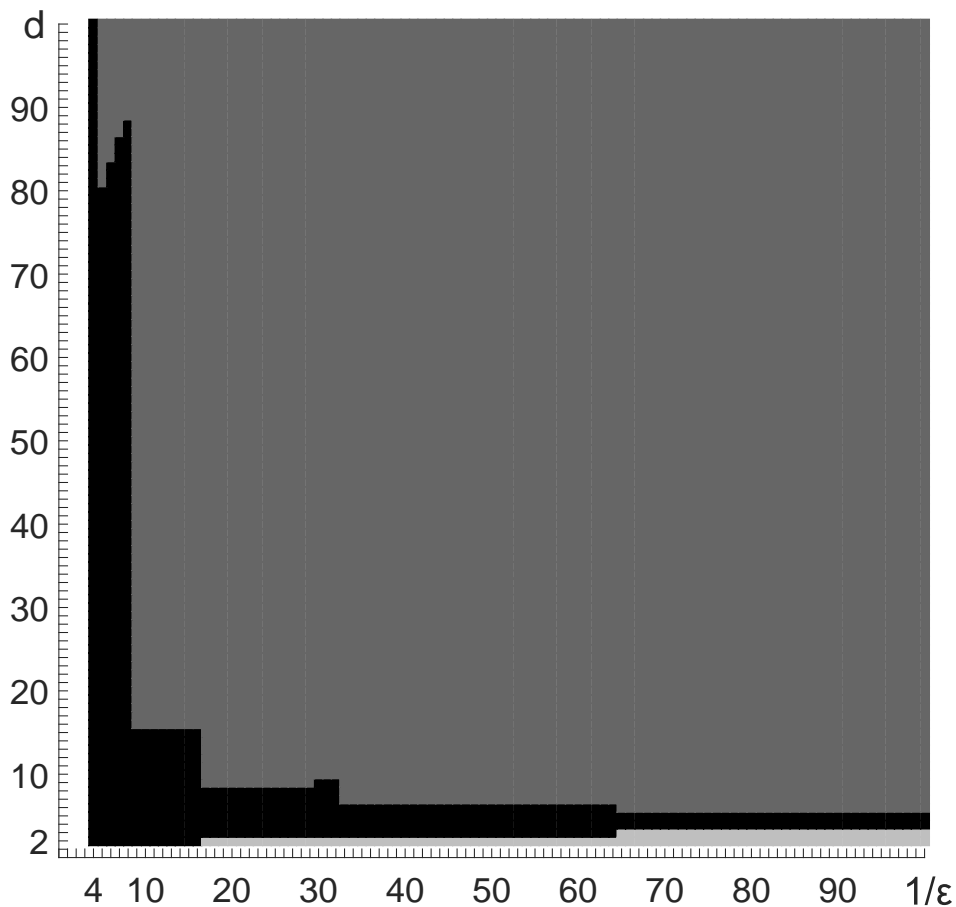}
 \label{comparison figure}
\end{figure}

\raggedright{

}


\begin{thebibliography}{XX}
 
 \bibitem{ahr} C.\,Aistleitner, A.\,Hinrichs, D.\,Rudolf:
 \textit{On the Size of the Largest Empty Box amidst a Point Set}.
 Discrete Applied Mathematics~\textbf{230}, 146--150, 2017.
 
 \bibitem{behw} A.\,Blumer, A.\,Ehrenfeucht, D.\,Haussler, M.\,Warmuth: 
 \textit{Learnability  and  the  Vapnik-Chervonenkis dimension.}
 J.\,Assoc.\,Comput.\,Mach.\,\textit{36}(4), 929--965, 1989.
 
 \bibitem{dj} A.\,Dumitrescu, M.\,Jiang:
 \textit{On the Largest Empty Axis-Parallel Box amidst $n$ Points}.
 Algorithmica~\textbf{66}(2), 225--248, 2013.
 
 \bibitem{nr} E.\,Novak, D.\,Rudolf:
 \textit{Tractability of the Approximation of High-Dimensional Rank One Tensors}.
 Constructive Approximation~\textbf{43}, 1--13, 2016.
 
 \bibitem{track2} E.\,Novak, H.\,Woźniakowski:
 \textit{Tractability of Multivariate Problems. Volume II: Standard Information for Functionals}.
 EMS, Zürich, 2010.
 
 \bibitem{rt} G.\,Rote, R.\,F.\,Tichy.
 \textit{Quasi-Monte Carlo Methods and the Dispersion of Point Sequences.}
 Math.\,Comput.\,Modelling, \textbf{23}(8-9), 9--23, 1996.
 
 \bibitem{r} D.\,Rudolf: \textit{An upper Bound on the Minimal Dispersion via Delta Covers}.
 Festschrift for Ian Sloan's 80th Birthday, to appear.
 Available at ArXiv e-prints, arXiv:1701.06430 [cs.CG].
 
 \bibitem{s} J.\,Sosnovec: \textit{A Note on the Minimal Dispersion of Point Sets
 in the Unit Cube}.
 ArXiv e-prints, 2017, arXiv:1707.08794 [cs.CG].
 
 \bibitem{t} V.\,N.\,Temlyakov: \textit{Universal Discretization}.
 ArXiv e-prints, 2017, arXiv:1708.08544 [math.NA].

\end{thebibliography}
\end{document}